\documentclass[review,a4paper]{elsarticle}
\usepackage{amsmath,amssymb,amsthm}
\usepackage{graphics,color}
\usepackage[multiple]{footmisc}

\newtheorem{theorem}{Theorem}
\newtheorem{corollary}[theorem]{Corollary}

\newtheorem{lemma}[theorem]{Lemma}

\newcommand{\N}{\mathbb{N}}

\newcommand{\prob}[2]{\mathbb{P}_{#1}\left(#2\right)}
\newcommand{\expect}[2]{\mathbb{E}_{#1}\left(#2\right)}
\newcommand{\Cycl}{R} 
\newcommand{\pcWh}{0.4030...} 
\newcommand{\lbTWh}{0.35} 
\newcommand{\ubTWh}{0.372} 

\def\ifpdf\input{#.pdf_t}\else\input{#.pstex_t}\fi1{\ifpdf\input{#1.pdf_t}\else\input{#1.pstex_t}\fi}

\begin{document}
\begin{frontmatter}
\title{Strict majority bootstrap percolation in the $r$-wheel\footnote{Supported by
CONICYT via Basal  (M.K.,P.M.,I.R), N\'ucleo Milenio ICM/FI P10-024F (M.K.,I.R.), Fondecyt 1130061 (I.R.), Fondecyt 1130709 (S.R.), ANR-09-BLAN-0164 (G.T.).}}

\author[dim]{M. Kiwi}
\author[dim]{P. Moisset de Espan\'{e}s}
\author[dim]{I. Rapaport\corref{cora}}
\cortext[cora]{Corresponding author (\texttt{rapaport@dim.uchile.cl}).}
\author[uai]{S. Rica}
\author[lama]{G. Theyssier} 
 \address[dim]{DIM, CMM (UMI 2807 CNRS), Universidad de Chile}
\address[uai]{Facultad de Ingenier\'{\i}a y Ciencias, Universidad Adolfo Ib\'a\~nez}
\address[lama]{LAMA, Universit\'e de Savoie, CNRS, France}

\begin{abstract}
In the strict Majority Bootstrap Percolation process each passive
vertex $v$ becomes active if at least  $\lceil \frac{deg(v)+1}{2}
\rceil$ of its neighbors are active (and thereafter never changes its
state). We address the problem of finding graphs for which a small
proportion of initial active vertices are likely to eventually make all
vertices active. We study the problem on a ring of $n$ vertices
augmented with a ``central'' vertex $u$. Each vertex in the ring, besides 
being connected to $u$,
is connected to its $r$ closest neighbors to the left and to the right. We
prove that if vertices are initially active with probability $p>1/4$
then, for large values of $r$, percolation occurs with probability
arbitrarily close to $1$ as $n\to\infty$. Also, if $p<1/4$,
then the probability of percolation is bounded away from $1$.
\end{abstract}

\begin{keyword}  
bootstrap percolation \sep interconnection networks.
\end{keyword}
\end{frontmatter}

\section{Introduction}
Consider the following deterministic process on a graph $G=(V,E)$.
Initially, every vertex in $V$ can be either \emph{active} or
\emph{passive}. A passive vertex $v$ becomes active iff at least $k$ of
its neighbors are already  active; once active, a vertex never changes
its state. This process is known as
\emph{$k$-neighbor bootstrap percolation}~\cite{Ch79}. If at the end of
the process all vertices are active, then we say that the initial
set of active vertices \emph{percolates}.
We wish to determine the minimum ratio of initial active  
vertices needed to achieve percolation with high  
probability.  More precisely, suppose that the elements of the initial  
set of active vertices $A \subseteq V$ are chosen independently with  
probability $p$.  The problem is finding the least $p$ for
which percolation of $A$ is likely to occur.

Since its introduction by Chalupa et al~\cite{Ch79},  the bootstrap
percolation process has mainly been studied  in the $d$-dimensional   
grid $[n]^d=\{1,\ldots,n\}^d$~\cite{BBDM12}. The precise
definition of \emph{critical probability} that has been used is the
following: $$p_c([n]^d,k)= \inf\{p \in [0,1]: \prob{p}{\text{$A$
percolates $[n]^{d}$}} \geq 1/2\}.$$

In~\cite{BBDM12} it is proved that, for every $d \geq k \geq 2$,
$p_c([n]^d,k)= \left(\frac{\lambda(d,k)+o(1)}{\log_{(k-1)}n}\right)^{d-k+1},$
where $\lambda(d,k) < \infty$ are equal to the values of specific definite integrals for every $d \geq k \geq 2$. 
In the \emph{(simple) Majority Bootstrap Percolation (simple MBP)} process (introduced in~\cite{BBM})
each passive vertex $v$ becomes active iff 
  at least $\lceil \frac{deg(v)}{2} \rceil$ of its neighbors are active, 
  where $deg(v)$ denotes the degree of vertex $v$ in $G$.  
Note that for $[n]^d$, the critical probability 
  for simple MBP percolation corresponds to $p_c([n]^d,d)$, 
  which goes to $0$ as $n\rightarrow\infty$.

Here we introduce the \emph{strict Majority Bootstrap Percolation (strict MBP)}
process: each passive vertex $v$ becomes active iff at least
$\lceil \frac{deg(v)+1}{2} \rceil$ of its neighbors are active. Note
that if $deg(v)$ is odd, then strict and simple MBP
coincide. For $[n]^d$ the critical probability in strict
MBP $p_c([n]^d,d+1)$ goes to $1$. This holds because, in this case,
any unit hypercube starting with its $2^d$ corners passive will stay passive forever.

A natural problem is finding graphs for which the critical probability in the strict MBP
is small.  Results by Balogh and Pittel~\cite{BP}
imply that the critical probability  of the strict MBP
for random 7-regular graphs is 0.269. In~\cite{R11}, 
two families of graphs for which the critical
probability is also small (but higher than 0.269) are explored. The idea behind these
constructions is the following. Consider a regular graph of even degree 
$G$. Let $G*u$ denote the graph $G$ augmented with a single universal
vertex $u$. The strict MBP dynamics on $G*u$
has two phases. In the {\emph{first phase}}, assuming that vertex $u$ is
not initially active, the dynamics restricted to $G$ corresponds to the
strict MBP. If more than half of the vertices
of~$G$ become active, then the universal vertex $u$ also becomes active,
and the {\emph{second phase}} begins. In this new phase, the dynamics
restricted to $G$ follows the simple MBP (and
full activation  becomes much more likely to occur).

The two augmented  graphs studied in~\cite{R11} 
  were the \emph{wheel}  ${\textsc{WH}}_n = u * \Cycl_n$ 
  and the \emph{toroidal grid plus a universal vertex}  
  ${\textsc{TWH}}_n = u*\Cycl_n^2$
  (where  $\Cycl_n$ is the ring on $n$ vertices and
  $\Cycl_n^2$ is the toroidal grid on $n^2$ vertices).
For a family of graphs $\mathcal{G}=(G_{n})_n$, the following parameter
  was defined (as before, $A$ denotes the initial set of active vertices):
$$
p^+_c({\mathcal{G}})=\inf\left\{p\in [0,1] : 
  \liminf_{n\to\infty} \prob{p}{\text{$A$ percolates $G_n$ in strict MBP}}= 1  \right\}.
$$

Consider the families  $\mathcal{WH}=({\textsc{WH}}_n)_n$    and
$\mathcal{TWH}=({\textsc{TWH}}_n)_{n}$. It was proved in~\cite{R11} that
$p^+_c({\mathcal{WH}})= \pcWh$.   For the toroidal case it was shown
that $\lbTWh \leq p^+_c({\mathcal{TWH}})\leq \ubTWh.$ Computing the critical probability for the wheel is 
trivial. Nevertheless, if we increase the \emph{radius} of the
vertices, then the situation becomes much
more complicated. More precisely, let $\Cycl_n(r)$ be the ring where
every vertex is connected to its $r$ closest vertices to the left and 
to its $r$ closest vertices to the right.  Here we study the strict MBP process
in a  generalization of the wheel that we call \emph{$r$-wheel} 
${\textsc{WH}}_n(r) = u * \Cycl_n(r)$. Our main result is the following:
\begin{theorem}
The limit of $p^+_c({\mathcal{WH}}(r))$, as $r\to\infty$,
  exists and equals $1/4$.
\end{theorem}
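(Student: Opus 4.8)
The plan is to reduce the question on ${\textsc{WH}}_n(r)$ to a one-dimensional problem on the ring $\Cycl_n(r)$. A ring vertex has degree $2r+1$, so its strict-majority threshold is $r+1$: while the hub $u$ is passive a ring vertex activates exactly when $r+1$ of its $2r$ ring-neighbours are active (\emph{phase 1}), and once $u$ is active the ring threshold drops to $r$ (\emph{phase 2}). The hub $u$ has degree $n$ and so activates precisely when a strict majority of the ring is active. Since phase 1 (with $u$ held passive) is monotone, the active set is maximal at the phase-1 closure, and therefore $u$ fires if and only if that closure already contains more than $n/2$ ring vertices. I would then observe that in that case phase 2 completes percolation: isolated passive runs are eroded from their ends with the help of the now-active $u$, so the whole graph activates. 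Thus percolation of ${\textsc{WH}}_n(r)$ is equivalent to the event that phase-1 strict-majority bootstrap on $\Cycl_n(r)$ reaches active density larger than $1/2$, and the entire problem becomes the analysis of this one-dimensional process.

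Next I would record the qualitative geometry of the phase-1 closure through two elementary computations. First, a run of at least $r+1$ consecutive initially-passive vertices is stable — each of its endpoints sees only $r$ active neighbours — so sufficiently long passive runs never activate and act as blockers. Second, a solid block of $r$ active vertices behaves as a front: it activates the next vertex as long as it does not run into a comparably long passive stretch, so fronts sweep through and fill the regions between blockers. The crucial feature is that for $p<1/2$ nothing happens in the bulk: a typical window of $2r$ neighbours has active-fraction concentrated near $p<1/2$, hence no vertex reaches its threshold, and the few vertices sitting in atypically dense windows are too isolated to trigger a cascade. Consequently a front can only be ignited inside a rare, sufficiently dense \emph{nucleus}, and such nuclei have a density that is exponentially small in $r$.

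The core of the proof is then a competition, as $n\to\infty$, between two exponentially small per-site densities: that of blockers, $e^{-r J(p)}$ with the elementary rate $J(p)=\log\frac{1}{1-p}$, and that of nuclei, $e^{-r\kappa(p)}$. If nuclei are asymptotically the denser of the two, then with high probability every segment between consecutive blockers contains a nucleus; fronts then fill all but a vanishing fraction of the ring, the density exceeds $1/2$, and percolation occurs. I would establish this regime for $p>1/4$ and $r$ large by a first- and second-moment argument over the $n$ candidate locations, giving $\limsup_{r}p^+_c({\mathcal{WH}}(r))\le 1/4$. In the opposite regime the blockers dominate, a positive fraction of segments contain no nucleus and retain their initial density $\approx p<1/2$, the phase-1 closure stays below majority and $u$ never fires; this gives, for $p<1/4$, a percolation probability bounded away from $1$, hence $\liminf_{r}p^+_c({\mathcal{WH}}(r))\ge 1/4$. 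Matching the two bounds shows that the limit exists and equals $1/4$.

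The main obstacle is the determination of the nucleation rate $\kappa(p)$ and the proof that the crossover $\kappa(p)=J(p)$ sits exactly at $p=1/4$ in the limit $r\to\infty$. Whereas the blocker rate is a one-line run-length computation, $\kappa(p)$ is the value of a large-deviation variational problem: among all local initial patterns that internally ignite a self-sustaining front one must isolate the least improbable one. The difficulty is that a short solid block grows only by borrowing active sites from the ambient density-$p$ environment, so the size of the critical droplet and the optimal local density have to be optimised jointly, and against the finite interaction range $r$. Carrying out this optimisation — and simultaneously controlling the strong dependencies created by the heavy overlap of neighbouring $2r$-windows, so that the first- and second-moment estimates remain valid — is where the real work lies; the clean constant $1/4$ should emerge precisely as the density $p$ at which the optimal nucleus cost $\kappa(p)$ meets the blocker cost $\log\frac{1}{1-p}$.
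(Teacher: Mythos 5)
Your high-level architecture does match the paper's: reduce the $r$-wheel to the ring $\Cycl_n(r)$, identify blockers (runs of $r+1$ initially passive vertices, per-site rate $q^{r+1}$, $q=1-p$) versus rare igniting patterns, and locate the threshold where the two exponential rates cross. But the proposal stops exactly where the proof has to start. You declare the determination of the nucleation rate $\kappa(p)$, and the verification that the crossover $\kappa(p)=\log\frac{1}{1-p}$ occurs at $p=1/4$, to be ``the main obstacle'' and ``where the real work lies'' --- and then you do not carry out either computation. That crossover \emph{is} the theorem; saying the constant $1/4$ ``should emerge'' from an unspecified large-deviation variational problem leaves both bounds as conjectures, so this is a genuine gap, not a packaging issue.

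The missing work is also more elementary than the joint droplet-size/density optimisation you anticipate, and it goes differently in the two directions. For $p<1/4$, no variational problem over self-sustaining fronts is needed: between two walls the configuration is \emph{frozen unless some vertex fires at the very first application of the rule}, i.e.\ unless some site initially has at least $r+1$ active vertices among its $2r$ ring neighbours. A Chernoff bound puts this at $(4pq)^r$ per site, an elementary Markov-chain estimate shows the nearest walls are at distance $O(rq^{-r})$ in expectation, and the union bound gives roughly $q^{-r}(4pq)^r=(4p)^r\to 0$ exactly when $p<1/4$; this one-step observation is what your sketch lacks. For $p>1/4$ you must exhibit enough igniting patterns to show the Chernoff rate is attained: the paper does this by an explicit combinatorial construction (words of length $2r+1$ with $r+1$ ones whose two halves interleave into a generalized Dyck word), proves each such pattern self-activates and spreads, and counts them via Catalan numbers to get $\Omega\bigl(4^r/\mathrm{poly}(r)\bigr)$ of them, hence nucleus probability $\approx(4pq)^r$ per block against wall probability $\approx q^r$, ratio $(4p)^{-r}\to 0$. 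Two further inaccuracies: your claim that any phase-1 closure with density above $1/2$ completes percolation once $u$ fires is unproven (the paper sidesteps it by conditioning on the whp event that the initial configuration already contains $r$ consecutive active vertices, which serve as a seed in phase 2); and your assertion that whp \emph{every} segment between consecutive blockers contains a nucleus is false for fixed $r$ as $n\to\infty$ --- an exponentially-small-in-$r$ but positive fraction of segments fail, which is why the paper instead proves $\expect{p}{X_0(n,r)}\to 1$ as $r\to\infty$ and converts this into percolation by Markov's inequality applied in both directions, with no second-moment or dependency control needed at all.
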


\section{Preliminary results}
We start by showing that we can reduce our problem to the issue
of whether a single fixed (non-universal) vertex eventually
 becomes active.
  
\begin{lemma}\label{lem:ringenough}
Let $0<p<1$ be the probability for a vertex to be initially active.  Let
$r$ be a positive integer. Denote by $p_W(n,r,p)$ the percolation
probability of    the \emph{$r$-wheel} and denote by $p_R(n,r,p)$ the
probability that the strict majority on $\Cycl_n(r)$ ends up with (strictly)
more active than passive vertices. Then,
\begin{align*}
    \liminf_{n\rightarrow\infty}p_R(n,r,p) &\ \leq\ \liminf_{n\rightarrow\infty}p_W(n,r,p), \\
    \limsup_{n\rightarrow\infty}p_W(n,r,p)&\ \leq\ p + (1-p)\cdot\limsup_{n\rightarrow\infty}p_R(n,r,p).
\end{align*}
\end{lemma}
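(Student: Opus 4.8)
The plan is to exploit the two-phase description of the dynamics sketched in the introduction, together with one deterministic ingredient that I will isolate as the crux. Throughout, couple the two processes by using the \emph{same} random states for the ring vertices and, independently, declaring $u$ initially active with probability $p$. Write $B$ for the pure strict MBP on $\Cycl_n(r)$ and $W$ for the strict MBP on ${\textsc{WH}}_n(r)$. Two observations drive everything: (i) as long as $u$ is passive, a ring vertex of $W$ (degree $2r+1$) needs $r+1$ active neighbours among its $2r$ ring-neighbours, which is exactly the threshold of $B$, so the ring of $W$ evolves identically to $B$; and (ii) $u$ (degree $n$) becomes active precisely when strictly more than half of the ring is active, i.e. exactly on the event defining $p_R$. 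Once $u$ is active each ring vertex needs only $r$ active ring-neighbours (\emph{simple} MBP on $\Cycl_n(r)$); call this the second phase.

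For the upper inequality I would argue pointwise and then pass to the limit. Split according to the initial state of $u$. If $u$ is initially active (probability $p$) bound the percolation probability by $1$. If $u$ is initially passive, then by (i) the ring of $W$ runs exactly like $B$ until $u$ activates, and by (ii) $u$ activates if and only if $B$ reaches strict majority; since percolation of $W$ forces $u$ to be active at the end, we get $\{W\text{ percolates}\}\cap\{u\text{ passive}\}\subseteq\{u\text{ passive}\}\cap\{B\text{ reaches majority}\}$. As the two events on the right are independent, this contributes at most $(1-p)\,p_R(n,r,p)$. Hence $p_W(n,r,p)\le p+(1-p)\,p_R(n,r,p)$ for every $n$, and taking $\limsup_{n\to\infty}$ yields the second displayed inequality.

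For the lower inequality I would establish the pointwise containment $\{B\text{ reaches majority}\}\subseteq\{W\text{ percolates}\}$, which gives $p_R(n,r,p)\le p_W(n,r,p)$ and hence the first inequality after taking $\liminf$. By monotonicity the active set of the ring in $W$ dominates that of $B$ at every step (the extra vertex $u$ can only help, and when $u$ is passive the two coincide by (i)); therefore if $B$ reaches strict majority so does the ring of $W$, at which moment $u$ is (or becomes) active by (ii) and the second phase takes over with a strict majority already active. It then remains to show that the second phase completes, i.e. the following purely deterministic statement.

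\medskip
\noindent\emph{Key claim.} If $u$ is active and strictly more than half of the vertices of $\Cycl_n(r)$ are active, then simple MBP activates the whole ring.

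\medskip
This is the step I expect to be the main obstacle. I would prove its contrapositive by characterising the configurations \emph{stable} for the second phase: a passive vertex survives iff at least $r+1$ of its $2r$ ring-neighbours are passive. First, in any stable configuration that is neither all-active nor all-passive, every maximal run of active vertices has length at most $r-1$, since a passive vertex adjacent to an active run of length $\ge r$ would see $r$ active neighbours inside that run and hence activate. Second, using this run structure together with the survival condition at the two endpoints of each passive run, a counting argument comparing total active and total passive run-lengths shows that the passive vertices form a weak majority; thus a strict active majority precludes any nontrivial stable configuration, proving the claim. The delicate point is that a window of radius $r$ typically spans several of these short runs, so the endpoint conditions are not purely local and must be summed globally around the ring; a periodic computation (equal-length active and passive runs, i.e. density one half, already fail to be stable) indicates that one half is exactly the right threshold and that the bound is tight.
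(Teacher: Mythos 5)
Your coupling setup, your observations (i) and (ii), and your proof of the \emph{second} inequality are all correct, and that part coincides with the paper's own argument: conditioning on the initial state of $u$, while $u$ is passive the ring of ${\textsc{WH}}_n(r)$ evolves exactly as strict MBP on $\Cycl_n(r)$, percolation forces $u$ to be active, and independence gives $p_W(n,r,p)\le p+(1-p)p_R(n,r,p)$ pointwise, hence in the limsup.

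The first inequality is where there is a genuine gap. You reduce it to the pointwise containment $\{B\text{ reaches majority}\}\subseteq\{W\text{ percolates}\}$, and this rests entirely on your Key claim: if $u$ is active and strictly more than $n/2$ ring vertices are active, then the threshold-$r$ (simple majority) dynamics on $\Cycl_n(r)$ activates everything. You do not prove this claim, and the sketch you give does not establish it. The two ingredients you name (active runs in a stable configuration have length at most $r-1$; stability conditions at the endpoints of passive runs) do not yield ``passive weak majority'' by any straightforward run-length count: the windows centered at \emph{active} vertices are unconstrained, and the natural global averaging (sum window counts over all vertices, bound passive-centered windows by $r-1$, bound active-centered windows by $r$ plus the distance to the nearest run end) only caps the active density at constants strictly above $1/2$, around $4/7$. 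Your periodic computation says nothing about non-periodic configurations, and the claim, if true at all, is tight in a strong sense: the period-$(2r+1)$ configuration consisting of $r-1$ consecutive active and $r+2$ consecutive passive vertices is stable for the threshold-$r$ dynamics (every window of $2r+1$ consecutive vertices contains exactly $r-1$ active ones), so stable configurations of density $(r-1)/(2r+1)\to 1/2$ exist and any correct proof must be essentially loss-free. The paper never needs this statement, because the lemma is asymptotic: for fixed $r$ and $p$ and any $\epsilon>0$, for $n$ large the initial configuration contains, with probability at least $1-\epsilon$, a block of $r$ \emph{initially} active consecutive ring vertices; on that event, once $u$ becomes active the passive vertex adjacent to the block sees $r$ active ring neighbors plus $u$, i.e.\ $r+1=\lceil(2r+1+1)/2\rceil$ active neighbors, so it activates, and inductively activation wraps around the whole ring. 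Intersecting this block event with $\{B\text{ reaches majority}\}$ (which forces $u$ to activate, by exactly the monotone coupling you describe) gives $p_W(n,r,p)\ge p_R(n,r,p)-\epsilon$ for all large $n$, and letting $\epsilon\to 0$ yields the liminf inequality. In short, the liminf/limsup formulation exists precisely so that the ``second phase completes'' step can be bought from a whp initial block rather than from your deterministic claim; to keep your route you would have to supply a complete proof of the Key claim, which is a considerably harder combinatorial statement than the lemma itself, and is left unproved in your proposal.
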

\begin{proof}
Note that for $\epsilon>0$ we can choose $n$ large enough so that  
the probability that at least one block of $r$ consecutive vertices are  
initially active is larger than ${1-\epsilon}$, in which case  
percolation occurs iff the universal vertex becomes active  
during the evolution.  We deduce the first
inequality by taking $\epsilon$ arbitrarily small. 
Note now that the universal vertex is active
when the dynamics stabilizes only if  it was either  already active
initially (probability $p$) or if it was initially passive and the
dynamics on the ring $\Cycl_n(r)$ produces more than $n/2$ active vertices.
\end{proof}

The vertices of the ring $\Cycl_{n}$ will be denoted as 
$0,1,\ldots,n-1$, starting at some arbitrary vertex (arithmetic over
vertex indices will always be modulo $n$). The positive integer
  $r$ will be called the \emph{radius}.

Lemma~\ref{lem:ringenough} shows that we can study the ring $\Cycl_n(r)$ and
its dynamics to derive results about the $r$-wheel. Now, fix some arbitrary
vertex and consider the $0$-$1$ random variable $X_i(n,r)$ giving the state of
vertex $i$ after stabilization of the dynamics ($X_{i}(n,r)=0$ if the 
  state is passive, and $X_{i}(n,r)=1$ if it is active).
Next, we show how to bound $p_R(n,r,p)$ in terms of
  $\expect{p}{X_0(n,r)}$.

\begin{lemma}\label{lem:thanksmarkov}
Let $0<p<1$, $n \in \N^+$, and $r$ a fixed radius.  Then,

$2\expect{p}{X_0(n,r)}-1\leq p_R(n,r,p)\leq 2\expect{p}{X_0(n,r)}.$
\end{lemma}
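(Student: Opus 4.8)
The plan is to work with the total number of active vertices at stabilization, $S=\sum_{i=0}^{n-1}X_i(n,r)$, and to observe that the event ``the strict majority dynamics ends up with strictly more active than passive vertices'' is exactly $\{S>n/2\}$, so that $p_R(n,r,p)=\prob{p}{S>n/2}$. The structural fact I would exploit is that $\Cycl_n(r)$ is vertex-transitive and the initial configuration is a product (i.i.d.) measure; hence the law of $X_i(n,r)$ does not depend on $i$, and by linearity of expectation $\expect{p}{S}=n\,\expect{p}{X_0(n,r)}$. Both inequalities then reduce to first-moment (Markov-type) estimates relating $\prob{p}{S>n/2}$ to $\expect{p}{S}$, which is presumably what the label of the lemma anticipates.

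For the upper bound, I would split the first moment as $\expect{p}{S}=\expect{p}{S\,\mathbf{1}[S>n/2]}+\expect{p}{S\,\mathbf{1}[S\le n/2]}\ge\expect{p}{S\,\mathbf{1}[S>n/2]}$ and note that on the event $\{S>n/2\}$ one has $S>n/2$ pointwise, so $\expect{p}{S}\ge\tfrac{n}{2}\,p_R(n,r,p)$. Dividing by $n$ and substituting $\expect{p}{S}=n\,\expect{p}{X_0(n,r)}$ yields $p_R(n,r,p)\le 2\,\expect{p}{X_0(n,r)}$.

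For the lower bound, I would bound the same decomposition from above: on $\{S>n/2\}$ use the trivial bound $S\le n$, and on its complement use $S\le n/2$. This gives $\expect{p}{S}\le n\,p_R(n,r,p)+\tfrac{n}{2}\bigl(1-p_R(n,r,p)\bigr)=\tfrac{n}{2}\bigl(1+p_R(n,r,p)\bigr)$, which after dividing by $n$ and rearranging becomes $2\,\expect{p}{X_0(n,r)}-1\le p_R(n,r,p)$.

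I do not expect a genuine obstacle here, since the statement is essentially a two-sided Markov inequality. The only points requiring a little care are the symmetry reduction $\expect{p}{S}=n\,\expect{p}{X_0(n,r)}$, which relies on vertex-transitivity of the ring together with the i.i.d.\ initialization so that the distribution of the stabilized state is shift-invariant, and the strict-versus-weak inequality in ``$S>n/2$'', which is harmless because $S$ is integer-valued and both target inequalities are non-strict.
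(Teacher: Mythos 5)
Your proof is correct and takes essentially the same route as the paper's: the paper applies Markov's inequality to $\sum_i X_i(n,r)$ and to $n-\sum_i X_i(n,r)$, combined with the same symmetry/linearity reduction $\expect{p}{\sum_i X_i(n,r)}=n\,\expect{p}{X_0(n,r)}$, and your indicator decompositions are just Markov's inequality written out explicitly. If anything, your treatment of the boundary event $\{\sum_i X_i(n,r)=n/2\}$ (using $S\le n/2$ on the complement) is slightly more careful than the paper's, which writes the complement with a strict inequality where a non-strict one is what the argument actually needs.
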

\begin{proof}
  By definition ${p_R(n,r,p)=\prob{p}{\sum_i X_i(n,r)> n/2}}$. By Markov's inequality we then have
  $\prob{p}{\sum_i X_i(n,r)> n/2} \leq \frac{2}{n}\expect{p}{\sum_i X_i(n,r)}.$
  Using linearity of expectation and the fact that all $X_i(n,r)$ are equally distributed (symmetry of the ring), we deduce ${p_R(n,r,p)\leq 2\expect{p}{X_0(n,r)}}$. The lower bound is
  obtained in the same way considering again Markov's inequality but
  for the (again positive) random variable ${n-\sum_i X_i(n,r)}$. More precisely:
  \[p_R(n,r,p) = 1 - \prob{p}{n-\sum_iX_i(n,r)>n/2}\geq 1 - \frac{2}{n}\expect{p}{n-\sum_iX_i(n,r)}\]
\end{proof}

\section{Lower bound on $p^{+}_{c}(\mathcal{WH}(r))$}
We will assume $n>2r+1$ and that
  the initial state of the universal vertex~$u$ is passive.
Let $0<p< 1/2$ and $q=1-p$. The starting configuration 
  $\sigma=(\sigma_0,\ldots,\sigma_{n-1})$, where vertex $j$ is initially
  active (respectively passive) if and only if $\sigma_j= 1$ (respectively
  $\sigma_j=0$), 
  occurs with probability $p^{\sum_j{\sigma_j}} q^{n-\sum_j{\sigma_j}}$. 
We write $X_0$ instead of $X_0(n, r)$. 
Conditioning on $\sigma_0$,
\begin{align}
\prob{p}{X_0=1} 
  \leq p + \prob{p}{X_0=1 | \sigma_0=0}. \label{eqn:first}
\end{align}

We say there is a \emph{wall} located 
  $\ell>0$ vertices to the left of vertex $0$
  if $\sigma_{-\ell}=1$, $\sigma_{-\ell-1}=\sigma_{-\ell-2}=\ldots
  =\sigma_{-\ell-(r+1)}=0$.
Similarly, we say there is a wall located at 
  $\ell>0$ vertices to the right of vertex $0$
  if $\sigma_{\ell}=1$, $\sigma_{\ell+1}=\sigma_{\ell+2}=\ldots
  =\sigma_{\ell+(r+1)}=0$.
Let $L$ (respectively $R$) 
  be the smallest positive $\ell$ such that there is a 
  wall located $\ell$ vertices to the left (respectively right) of vertex $0$ 
  (if a wall does not occur, let $L=R=n$). 
For $0<\Delta<n$ to be fixed later, and since $L$ and $R$ are 
  identically distributed, we have that:
\begin{align}
& \prob{p}{X_0=1 | \sigma_0=0} \nonumber \\ 
& \quad \leq 2\cdot\prob{p}{X
_0=1, R\geq \Delta | \sigma_0=0} + \prob{p}{X_0=1 \wedge L, R<\Delta | \sigma_0=0} \label{eqn:second}
\end{align}

Summarizing, to bound $\expect{p}{X_0}=\prob{p}{X_0=1}$ we can
bound the two terms in the right hand side of (\ref{eqn:second}). 
The proof of next lemma is straightforward.

\begin{lemma}\label{lem:expectation}
For $0<p<1$ and positive integers $a,r$,

$\expect{p}{R | \sigma_0=\sigma_1=\ldots=\sigma_{a-1}=0, \sigma_{a}=1} 
  \leq q^{-r}(aq^{r}+1/(pq)).$
\end{lemma}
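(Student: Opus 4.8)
The plan is to recognize $R$ as a sum of independent geometric gaps stopped at the first ``large'' gap, and then to apply Wald's identity. First I would describe the configuration to the right of vertex $0$ under the given conditioning: positions $1,\ldots,a-1$ are passive and position $a$ is active, while the states $\sigma_{a+1},\sigma_{a+2},\ldots$ are i.i.d.\ Bernoulli$(p)$. Let $t_1=a<t_2<t_3<\cdots$ be the successive active vertices at positions $\geq 1$, and let $G_i=t_{i+1}-t_i$ be the gap to the next active vertex. Each $G_i$ is geometric with $\prob{p}{G_i=j}=q^{j-1}p$ for $j\geq 1$, and the gaps are mutually independent.

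The key structural observation is that a \emph{wall} sits at an active vertex $t_i$ precisely when the following $r+1$ vertices are all passive, i.e.\ exactly when $G_i>r+1$; hence $R=t_{i^*}$ where $i^*=\min\{i:G_i>r+1\}$. Since $\prob{p}{G_i>r+1}=q^{r+1}$ and the gaps are independent, $i^*$ is geometric with parameter $q^{r+1}$, so $\expect{p}{i^*}=q^{-(r+1)}$; crucially, $i^*$ is a stopping time for the sequence $(G_i)_i$ because the event $\{i^*=m\}$ depends only on $G_1,\ldots,G_m$.

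Writing $R=t_{i^*}=a+\sum_{i=1}^{i^*-1}G_i$ and bounding $\sum_{i=1}^{i^*-1}G_i\leq\sum_{i=1}^{i^*}G_i$, I would apply Wald's identity to the stopped sum to get $\expect{p}{\sum_{i=1}^{i^*}G_i}=\expect{p}{i^*}\cdot\expect{p}{G_1}=q^{-(r+1)}\cdot p^{-1}$. This yields $\expect{p}{R}\leq a+1/(pq^{r+1})=q^{-r}(aq^{r}+1/(pq))$, which is exactly the claimed bound.

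The remaining care — and the only place subtlety arises — is that $R$ is defined on the finite ring (with the cap $R=n$ when no wall occurs), whereas the gap computation is most naturally carried out on the one-sided infinite i.i.d.\ sequence. I expect this to be the main, though minor, obstacle: I would note that truncating at $n$ only decreases $R$, and that wrapping around the ring can only create additional walls and hence move the first wall no later, so the half-line expectation dominates the ring expectation and the bound transfers verbatim. A direct finite-sum evaluation of $\expect{p}{i^*}$ together with $\expect{p}{G_1\mid G_1\leq r+1}$ would give the same estimate without invoking stopping-time machinery, should one prefer an elementary route.
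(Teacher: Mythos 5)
Your proof is correct and arrives at the paper's exact bound, but by a genuinely different computational route. The paper scans rightward from vertex $a$ with an absorbing Markov chain on states $0,1,\ldots,r+1$ (the state records the current run of passive vertices), solves the first-step recursion $\expect{}{N_s}=1+q\cdot\expect{}{N_{s+1}}+p\cdot\expect{}{N_0}$ to get $\expect{}{N_0}=\sum_{j=1}^{r+1}q^{-j}\leq 1/(pq^{r+1})$, and uses $R\leq a+N_0$. You instead decompose the same rightward scan into i.i.d.\ geometric gaps between active vertices, stop at the first gap exceeding $r+1$, and invoke Wald's identity to get $\expect{p}{\sum_{i=1}^{i^*}G_i}=q^{-(r+1)}p^{-1}$ exactly. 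The two are consistent: writing $R_\infty$ for the half-line first-wall position, the paper's dominating quantity is $a+N_0=R_\infty+(r+1)$ while yours is $t_{i^*+1}=R_\infty+G_{i^*}$, and $\expect{p}{G_{i^*}-(r+1)}=1/p$ by memorylessness, so both bounds equal $a+1/(pq^{r+1})$. What your route buys is that the constant appears as an exact expectation and the geometric law of $i^*$ is explicit; what the paper's buys is that it is entirely elementary, with no stopping-time machinery to verify. One caution on your last paragraph: the claim that wrapping around the ring ``can only create additional walls'' is not literally true, since the wrapped segment contains the active vertex $a$, which destroys (rather than creates) walls at positions $\ell\geq n+a-r-1$; so the pointwise domination of the ring's $R$ by $R_\infty$ can fail there. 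The transfer nonetheless goes through, for a reason already present in your argument: your bound $t_{i^*+1}\geq R_\infty+r+2$ carries enough slack that whenever $R_\infty$ falls in that problematic final stretch the bound exceeds $n$, which caps the ring value of $R$. The paper's proof silently relies on the same slack (it asserts $R\leq a+N_0$ with no comment on the wrap-around), so on this point your write-up is no less rigorous than the published one.
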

\begin{proof}
Consider a Markov chain with states labeled $0,1,\ldots,r+1$ where 
  for all $s\leq r$, the probability of going from state $s$ to
  $0$ (respectively $s$ to $s+1$) is $p$ (respectively $q$), and once
  state $r+1$ is reached, the Markov chain stays there forever.
For $s\in\{0,\ldots,r+1\}$, let $N_s$ be the number of 
  steps it takes the Markov chain to reach state $r+1$ when
  it starts at state $s$.
Note that 
\[
\expect{}{R| \sigma_0=\sigma_1=\ldots=\sigma_{a-1}=0, \sigma_{a}=1} 
  \leq a+\expect{}{N_{0}}.
\]
Moreover, $\expect{}{N_{r+1}}=0$, and 
  $\expect{}{N_s} = 1+q\cdot\expect{}{N_{s+1}}+p\cdot\expect{}{N_0}$
  for all $0<s\leq r$.
Thus, for all $0\leq s\leq r+1$,
\[
\expect{}{N_{0}} 
  = \sum_{j=1}^{s} \frac{1}{q^j} + \expect{}{N_s} 
  = \sum_{j=1}^{r+1} \frac{1}{q^j}
  \leq \frac{1}{pq^{r+1}}.
\]
Putting everything together yields the  result.
\end{proof}

\begin{corollary}\label{cor:expectation}
For $0<p<1$ and  positive integers $r,\Delta$,

$\prob{p}{X_0=1, R\geq\Delta | \sigma_0=0} 
  \leq \frac{1}{\Delta}\cdot q^{-r}(rq^{r}+1/(pq)).$
\end{corollary}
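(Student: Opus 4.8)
The plan is to combine a first--moment (Markov) estimate for the wall distance $R$ with a blocking property of the dynamics that forces an initially active vertex to lie close to vertex $0$ whenever $X_0=1$. First I would record the crucial dynamical fact: if $r+1$ consecutive vertices are all initially passive, then none of them ever becomes active. Indeed, let $\tau$ be the first time at which some vertex $c$ of such a block turns active; just before time $\tau$ the entire block is still passive, and an elementary count shows that exactly $r$ of the $2r$ neighbors of $c$ lie outside the block (this holds regardless of the position of $c$ within the block), so $c$ has at most $r$ active neighbors at time $\tau-1$, contradicting the requirement of $r+1$. Applying this to the block $\{0,1,\ldots,r\}$ and using $\sigma_0=0$, the event $\{X_0=1\}$ forces at least one of $\sigma_1,\ldots,\sigma_r$ to equal $1$; that is, the first initially active vertex to the right of $0$ sits at some position $a\le r$.

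Next I would turn the probability into an expectation of $R$. Since $R\ge\Delta$ on $\{R\ge\Delta\}$, we have the pointwise bound $\mathbf{1}\{X_0=1,\,R\ge\Delta\}\le \frac{1}{\Delta}\,R\,\mathbf{1}\{X_0=1\}$ (on the event the right-hand side is at least $1$, off it the right-hand side is nonnegative), and taking conditional expectations gives
\[
\prob{p}{X_0=1,\ R\ge\Delta \mid \sigma_0=0}\ \le\ \frac{1}{\Delta}\,\expect{p}{R\cdot\mathbf{1}\{X_0=1\}\mid \sigma_0=0}.
\]
I would then condition on the position $a$ of the first initially active vertex to the right of $0$. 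By the blocking fact every term with $a>r$ carries the factor $\mathbf{1}\{X_0=1\}=0$ and drops out, so only $a\in\{1,\ldots,r\}$ survive. For each such $a$ the conditioning is exactly $\sigma_0=\cdots=\sigma_{a-1}=0,\ \sigma_a=1$, and bounding $\mathbf{1}\{X_0=1\}\le 1$ lets me apply Lemma~\ref{lem:expectation}:
\[
\expect{p}{R \mid \sigma_0=\cdots=\sigma_{a-1}=0,\ \sigma_a=1}\ \le\ q^{-r}\bigl(aq^{r}+1/(pq)\bigr)\ \le\ q^{-r}\bigl(rq^{r}+1/(pq)\bigr),
\]
the last step using $a\le r$. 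Since the events $\{\text{first active vertex at }a\}$ are disjoint, their conditional probabilities sum to at most $1$, and factoring out the now $a$--free bound yields $\expect{p}{R\,\mathbf{1}\{X_0=1\}\mid\sigma_0=0}\le q^{-r}(rq^{r}+1/(pq))$. Dividing by $\Delta$ produces exactly the claimed inequality.

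I expect the only genuine obstacle to be the dynamical blocking step: one must argue that a full block of $r+1$ passive cells is \emph{permanently} inert, not merely inert for a single step, which is precisely what the ``first activation time'' argument secures, and one must verify that the neighbor count always leaves exactly $r$ neighbors outside the block irrespective of which cell $c$ is examined. Everything after that is bookkeeping: the pointwise Markov bound and matching the conditioning to Lemma~\ref{lem:expectation}.
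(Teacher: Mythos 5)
Your proposal is correct and follows essentially the same route as the paper: decompose according to the position $a\le r$ of the first initially active vertex to the right of $0$ (the events $C_a$), apply Markov's inequality to $R$, and invoke Lemma~\ref{lem:expectation} with the bound $a\le r$. The only differences are cosmetic — you apply Markov's inequality pointwise before decomposing rather than after, and you spell out the blocking fact (that $r+1$ consecutive initially passive vertices stay passive forever) which the paper asserts without proof.
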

\begin{proof}
If vertex $0$ eventually becomes active, it
must be the case that initially it did not belong to a block of $r+1$
consecutive passive vertices. Thus, if $X_0=1$ and $\sigma_0=0$, then there
must exist  a positive integer $a$ such that $a\leq r$,
$\sigma_1=\sigma_2=\ldots=\sigma_{a-1}=0$, and $\sigma_a=1$. For
brevity, we will denote this particular array of outcomes for the
$\sigma$'s as $C_a$. By Markov's inequality,
\begin{align*}
 \prob{p}{X_0=1, R\geq\Delta | \sigma_0=0} 
  & \leq \sum_{a=1}^{r} 
       \prob{p}{R\geq\Delta | C_a}
       \prob{p}{C_a} \leq \frac{1}{\Delta}\max_{a=1,\ldots,r} \expect{p}{R | C_a}.
\end{align*}
The desired conclusion follows from Lemma~\ref{lem:expectation}.
\end{proof}

\begin{lemma}\label{lem:dead-block} 
For $0<p<1/2$ and positive integers $r,\Delta$,

$\prob{p}{X_0=1 \wedge L,R<\Delta | \sigma_0=0} \leq 2\Delta\left(4pq\right)^{r}.$
\end{lemma}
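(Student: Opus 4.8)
The plan is to exploit the fact that a block of $r+1$ consecutive passive vertices can never be activated, so that the two walls surrounding vertex $0$ seal it off from the rest of the ring. First I would establish this \emph{persistence} property: if $\sigma_j=\sigma_{j+1}=\dots=\sigma_{j+r}=0$, then all of $j,\dots,j+r$ remain passive forever. Indeed, consider the first time at which some vertex of this block would turn active; each vertex $j+i$ with $0\le i\le r$ has exactly $r$ of its $2r$ neighbours inside the block (namely $i$ to the left and $r-i$ to the right), all still passive just before that time, so it sees at most $r<r+1$ active neighbours, a contradiction.

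Next I would use persistence to localise the dynamics. When $L=\ell$ and $R=m$, the configuration contains the dead blocks $[-\ell-r-1,-\ell-1]$ and $[m+1,m+r+1]$, each of width $r+1$ and hence permanently passive. Because the interaction radius is exactly $r$, no vertex of the segment $[-\ell,m]$ can see past either dead block: a vertex at $-\ell$ looks left only into the left dead block, and a vertex at $m$ looks right only into the right one. Consequently the evolution of $[-\ell,m]$, and in particular the final state $X_0$, depends only on the initial configuration on $[-\ell,m]$, the exterior acting as a permanently passive boundary. This is precisely why a wall is defined with a dead block of width $r+1$, matching the radius.

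The core estimate then bounds, for a fixed wall position, the conditional probability that vertex $0$ becomes active inside the insulated segment. Since $\{L,R<\Delta\}\subseteq\{R<\Delta\}$ and the two sides are symmetric, I would bound the target probability by $2\sum_{m=1}^{\Delta-1}\prob{p}{X_0=1,\ R=m,\ L<\Delta\mid\sigma_0=0}$, the factor $2$ coming from left/right symmetry. For each fixed $m$ the two dead blocks (the right one from $R=m$, the left one guaranteed by $L<\Delta$) insulate vertex $0$, and I would argue that, trapped between them, vertex $0$ can be activated only if the initial configuration on the wall side realises a sufficiently active-heavy pattern over a span of order $r$: reading the cells as a $\pm1$ walk, activation forces a ballot-type excursion of length $\approx 2r$, whose probability I would bound by $\binom{2r}{r}(pq)^{r}\le(4pq)^{r}$ (equivalently by a biased random-walk estimate, in which $4pq$ is the natural base). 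Summing the uniform bound $(4pq)^r$ over the at most $\Delta$ positions and the factor $2$ yields $2\Delta(4pq)^r$; note that $q^{-r}(4pq)^r=(4p)^r\to 0$ exactly when $p<1/4$, which is where the threshold will come from.

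The main obstacle is the central assertion above: making precise the necessary condition that ties ``vertex $0$ activates while trapped'' to a balanced/excursion pattern of length $\Theta(r)$, and doing so uniformly in the near-wall position $m$ and independently of the (irrelevant) location of the opposite wall, which is used only for insulation. I expect to handle this through the one-dimensional filling dynamics — a passive gap of length $\le r$ flanked by active cells fills in, whereas a gap of length $\ge r+1$ is frozen — together with a monotonicity/coupling argument reducing to a worst-case boundary, after which the elementary count $\binom{2r}{r}(pq)^r\le(4pq)^r$ closes the bound.
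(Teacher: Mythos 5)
Your two preliminary steps are sound and match what the paper uses implicitly: blocks of $r+1$ consecutive passive vertices stay passive forever, and two such dead blocks insulate the segment between them, so the fate of vertex $0$ depends only on the initial configuration inside that segment. But the proof stops exactly where the lemma's content begins: the ``central assertion'' you yourself flag as the main obstacle is the whole point, and your sketch of it (a ``ballot-type excursion of length $\approx 2r$'') is not a proof and, as far as I can see, is not even the right mechanism --- excursion/Dyck-word structure belongs to the paper's \emph{upper}-bound direction (Lemma~\ref{lem:include-generalized}), not here. The missing idea is much more static: let $i$ be the \emph{first} vertex of the trapped region to become active. At the moment it flips, no other vertex of the region has yet changed state, and the only exterior vertices within its radius-$r$ neighborhood lie inside the dead blocks, which by your persistence property never activate. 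Hence every active neighbor it sees was active \emph{initially}; writing $S_i$ for the number of initially active vertices among the $2r$ neighbors of $i$, the event $\{X_0=1,\ L,R<\Delta,\ \sigma_0=0\}$ forces the existence of some $i$ with $-\Delta<i<\Delta$ and $S_i\geq r+1$. A union bound over these fewer than $2\Delta$ positions, together with the binomial tail estimate $\prob{p}{S_i\geq r+1}\leq (4pq)^r$ (Chernoff with $t=1/2-p$, or elementarily: $p^kq^{2r-k}\leq (pq)^r$ for $k\geq r$ since $p<q$, and $\sum_{k>r}\binom{2r}{k}\leq 2^{2r-1}$), gives exactly $2\Delta(4pq)^r$.

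Separately, your decomposition over the wall position $m$ works against you. Since the events $\{R=m\}$ are disjoint, no factor $2$ arises there; more importantly, once $m$ is fixed, the only visible route to bounding the per-$m$ term is again ``some $i$ in the region has $S_i\geq r+1$,'' and the number of candidate positions $i$ is of order $\Delta$, so your plan would deliver $O\bigl(\Delta^2(4pq)^r\bigr)$ rather than $O\bigl(\Delta(4pq)^r\bigr)$. That loss is fatal for the intended application: in Theorem~\ref{thm:upperbound} one takes $\Delta=Cq^{-r}$, and $\Delta^2(4pq)^r=\Theta\bigl((4p/q)^r\bigr)$ tends to $0$ only when $p<1/5$, so the $1/4$ threshold would be lost. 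The union bound must be taken directly over the position of the first activated vertex, with the wall positions left unspecified inside the event $\{L,R<\Delta\}$.
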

\begin{proof}
Suppose the closest wall to the left (respectively right) of vertex
$0$ is at vertex $-a$ (respectively $b$). Furthermore, suppose vertex $0$ is
passive. Note that for vertex $0$ to eventually become active, it must be
the case that some passive vertex $i$ for $-a < i < b$ must necessarily
become active the first time the strict majority dynamics is applied.
Hence, if $S_i$ denotes the number of $j$'s, $j\neq i$ and    $i-r\leq
j\leq i+r$, for which  vertex $j$ initially takes the value $1$, then
\begin{align*}
\prob{p}{X_0=1 \wedge L,R<\Delta | \sigma_0=0} & \leq \prob{p}{\exists i, -\Delta<i<\Delta \text{s.t.} S_i\geq r+1} \\
  & \leq 2\Delta\max_{i:-\Delta<i<\Delta}\prob{p}{S_i\geq r+1}\,.
\end{align*}

However, a Chernoff bound
 tells us that, for $t=1/2-p \leq (r+1)/(2r)-p$,
\begin{align*}
\prob{p}{S_i\geq r+1}  & \leq \prob{p}{S_i\geq (p+t)\cdot 2r}  \leq \left(\left(\frac{p}{p+t}\right)^{p+t} \left(\frac{q}{q-t}\right)^{q-t}\right)^{2r} \leq \left(4pq\right)^{r}
\end{align*}
Putting everything together yields the conclusion.
\end{proof}

\begin{theorem}\label{thm:upperbound}
  For all $0<p<1/4$ there exists a large enough
  integer $r_0=r_0(p)$ such that 
  if $r\geq r_0$ and $n>2r+1$, then $\expect{p}{X_0(n,r)}<1/4$.
\end{theorem}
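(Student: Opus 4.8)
The plan is to chain the four facts already in hand into a single $\Delta$-parametrized upper bound on $\expect{p}{X_0}=\prob{p}{X_0=1}$ and then optimize over $\Delta$. Combining (\ref{eqn:first}), (\ref{eqn:second}), Corollary~\ref{cor:expectation} and Lemma~\ref{lem:dead-block}, and using the identity $q^{-r}(rq^{r}+1/(pq))=r+1/(pq^{\,r+1})$, I get
\begin{equation*}
\expect{p}{X_0}\ \leq\ p+\frac{2}{\Delta}\left(r+\frac{1}{pq^{\,r+1}}\right)+2\Delta\,(4pq)^{r}.
\end{equation*}
Since $p<1/4$ there is a fixed gap $1/4-p>0$, so it is enough to show that the two $\Delta$-dependent terms can be driven below $1/4-p$ for all large $r$. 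Crucially, the cited corollary and lemma are stated for every positive integer $\Delta$, so I may let $\Delta$ grow with $r$ freely, with no constraint linking it to $n$; this is what lets a single argument cover all $n>2r+1$ at once.

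The heart of the proof is the choice of $\Delta$. I would take $\Delta=\lceil\alpha^{r}\rceil$ for a constant $\alpha$ chosen in the open interval $(1/q,\,1/(4pq))$. With this choice the first correction term equals
$$\frac{2r}{\alpha^{r}}+\frac{2}{pq}\cdot\frac{1}{(\alpha q)^{r}},$$
which tends to $0$ because $\alpha>1/q>1$ makes both $\alpha>1$ and $\alpha q>1$; and the second correction term equals $2\,(4pq\,\alpha)^{r}$, which tends to $0$ because $\alpha<1/(4pq)$ makes $4pq\,\alpha<1$. The decisive point is that the interval $(1/q,1/(4pq))$ is nonempty exactly when $1/q<1/(4pq)$, i.e.\ when $4p<1$. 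This is precisely the hypothesis $p<1/4$, and it is what fixes the threshold at $1/4$: for $p\geq1/4$ no admissible $\alpha$ exists and the two constraints on $\Delta$ become irreconcilable.

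It then follows that for the chosen $\alpha$ both correction terms are $o(1)$, so some $r_0=r_0(p)$ makes their sum strictly less than $1/4-p$ for every $r\geq r_0$, yielding $\expect{p}{X_0(n,r)}<p+(1/4-p)=1/4$ for all $n>2r+1$. The main obstacle is exactly the balancing of $\Delta$ just described: the first term forces $\Delta$ to grow faster than $(1/q)^{r}$ in order to kill the $1/(pq^{\,r+1})$ contribution, while the second forces $\Delta$ to grow slower than $(1/(4pq))^{r}$, and these two exponential requirements can be met simultaneously if and only if $p<1/4$. Alternatively one can skip the choice of $\alpha$ and minimize the right-hand side directly at $\Delta\approx\sqrt{A/B}$ with $A=r+1/(pq^{\,r+1})$ and $B=(4pq)^{r}$, obtaining an optimal correction of order $\sqrt{AB}=O\big((4p)^{r/2}\big)\to0$; I find the explicit window for $\alpha$ clearer for displaying why $1/4$ is the critical value.
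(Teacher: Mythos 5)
Your proposal is correct and follows essentially the same route as the paper: the same chaining of \eqref{eqn:first}, \eqref{eqn:second}, Corollary~\ref{cor:expectation} and Lemma~\ref{lem:dead-block}, followed by an exponential-in-$r$ choice of $\Delta$ whose feasibility is exactly the condition $4p<1$. The only difference is cosmetic: the paper takes $\Delta=\frac{8r}{pq}\,q^{-r}$ (the lower edge of your window $(1/q,\,1/(4pq))$ boosted by a polynomial factor, giving the explicit bound $p+\frac{1}{2r}+2C(4p)^{r}$), whereas you take $\Delta=\lceil\alpha^{r}\rceil$ with $\alpha$ strictly inside the window, which makes the same trade-off slightly more transparent.
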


\begin{proof}
  Let $r'_0=r'_0(p)$ be such that $r\geq r'_0$ implies that $rq^r\leq
  1/pq$, and let $C=8r/(pq)$.  By Corollary~\ref{cor:expectation},
  $ \prob{p}{X_0=1, R\geq Cq^{-r} | \sigma_0=0} \leq
  \frac{2}{Cpq}=\frac{1}{4r}.$ By~\eqref{eqn:first}, and fixing $\Delta=Cq^{-r}$
  in~\eqref{eqn:second}, and Lemma~\ref{lem:dead-block}, we obtain
  that:
  
 $\prob{p}{X_0=1} \leq p + \frac{1}{2r}+ 2Cq^{-r}(4pq)^{r} = p +
  \frac{1}{2r} + 2C(4p)^{r}.$
  
\noindent%
Hence, for $p<1/4$ there exists a large enough positive integer
  $r_0=r_0(p)\geq r'_{0}(p)$ so that if $r\geq r_0$, then $p +
  \frac{1}{2r} + 2C(4p)^{r} < p+\frac{1}{r} < 1/4$.
\end{proof}
Theorem~\ref{thm:upperbound},
  Lemma~\ref{lem:thanksmarkov}, and Lemma~\ref{lem:ringenough} 
  yield the following:
\begin{corollary}\label{cor:liminf}
$\liminf_{r\to\infty} p^+_c({\mathcal{WH}}(r)) \geq 1/4$.
\end{corollary}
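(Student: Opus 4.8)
The plan is to chain the three cited results so as to show that whenever $p<1/4$ the $r$-wheel fails to percolate with probability bounded away from $1$ for all sufficiently large $r$, and then to pass to the limit $p\to 1/4^-$.

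First I would fix an arbitrary $p<1/4$ and invoke Theorem~\ref{thm:upperbound} to obtain $r_0=r_0(p)$ such that $\expect{p}{X_0(n,r)}<1/4$ for every $r\geq r_0$ and every $n>2r+1$. Feeding this into the upper bound of Lemma~\ref{lem:thanksmarkov} gives $p_R(n,r,p)\leq 2\expect{p}{X_0(n,r)}<1/2$ under the same constraints, and hence $\limsup_{n\to\infty}p_R(n,r,p)\leq 1/2$ for each fixed $r\geq r_0$.

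Next I would transfer this bound to the wheel via the second inequality of Lemma~\ref{lem:ringenough}, obtaining
$$\limsup_{n\to\infty}p_W(n,r,p)\ \leq\ p+(1-p)\cdot\tfrac{1}{2}\ =\ \tfrac{1+p}{2}\ <\ \tfrac{5}{8}\ <\ 1$$
for every $r\geq r_0$. In particular $\liminf_{n\to\infty}p_W(n,r,p)\leq\limsup_{n\to\infty}p_W(n,r,p)<1$, so $p$ does not belong to the set $\{p'\in[0,1]:\liminf_{n\to\infty}p_W(n,r,p')=1\}$ whose infimum defines $p^+_c({\mathcal{WH}}(r))$.

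Finally I would upgrade ``this particular value $p$ is excluded'' into the genuine lower bound $p^+_c({\mathcal{WH}}(r))\geq p$. Here I rely on the standard monotone coupling for bootstrap percolation: adding initially active vertices can only help activation, so $p_W(n,r,\cdot)$ is nondecreasing in its probability argument and the defining set of $p^+_c$ is upward closed. Consequently every $p'<p$ is likewise excluded, forcing the infimum to be at least $p$; thus $p^+_c({\mathcal{WH}}(r))\geq p$ for all $r\geq r_0(p)$, whence $\liminf_{r\to\infty}p^+_c({\mathcal{WH}}(r))\geq p$. Since $p<1/4$ was arbitrary, taking the supremum over all such $p$ yields $\liminf_{r\to\infty}p^+_c({\mathcal{WH}}(r))\geq 1/4$. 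The only nonroutine point is this last monotonicity step, which converts a statement about a single probability value into a bound on the threshold itself; everything else is a direct substitution into the three lemmas.
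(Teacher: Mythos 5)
Your proof is correct and follows exactly the route the paper intends: the paper gives no explicit proof of this corollary, merely stating that Theorem~\ref{thm:upperbound}, Lemma~\ref{lem:thanksmarkov}, and Lemma~\ref{lem:ringenough} yield it, and your chaining of these three results (expectation bound $\Rightarrow$ $p_R<1/2$ $\Rightarrow$ $\limsup_n p_W\leq (1+p)/2<1$) is precisely that argument. The monotone-coupling step you flag, namely that the set defining $p^+_c$ is upward closed so that excluding a single $p<1/4$ forces $p^+_c({\mathcal{WH}}(r))\geq p$, is indeed necessary to complete the deduction and is handled correctly; the paper simply leaves it implicit.
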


\section{Upper bound on $p^{+}_{c}(\mathcal{WH}(r))$}
  Consider a simplified process with three states on the
  one-dimensional integer lattice $\mathbb{Z}$: 
  (i) $w$, a wall, (ii) $s$, a spreading state, and 
  (iii) $e$, an empty lattice point.
Let sites in state $w$ and $s$ remain so
  forever, and in consecutive rounds let sites in state $e$ with at 
  least one neighbor in state $s$ update to state $s$. 
Let $p_w$, $p_s$ and $p_e$ be positive initial probabilities of 
  states $w$, $s$ and $e$ respectively, where $p_w + p_s + p_e = 1$.
Each lattice point is initially assigned a state, independent of the 
  other lattice point states. 
\begin{lemma}\label{lem:simple}
The probability that lattice point $0$ is eventually in state 
  $s$ is greater than $1/(1+p_w/p_s)$.
\end{lemma}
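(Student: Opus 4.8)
The plan is to reduce the question to a simple observation about the nearest non-empty sites on either side of the origin. First I would condition on the initial state $\sigma_0$ of lattice point $0$. If $\sigma_0=s$ then $0$ is trivially eventually in state $s$, and if $\sigma_0=w$ it remains a wall forever; so the only interesting case is $\sigma_0=e$. The key structural remark is that a wall blocks the spreading state: scanning to the right of $0$, the origin gets activated from the right if and only if the first site $j>0$ whose initial state is \emph{not} $e$ is in state $s$ (if that first non-empty site is a $w$, the spreading emanating from any $s$ further to the right is blocked before it can reach $0$). The symmetric statement holds on the left, and the two half-lines involve disjoint collections of sites, hence the corresponding events are independent.

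Next I would compute these probabilities. Since each site is independently in state $e$ with probability $p_e<1$, almost surely each half-line contains a non-$e$ site; and conditioned on a site being non-$e$, it is in state $s$ with probability $p_s/(p_s+p_w)$ and in state $w$ with probability $p_w/(p_s+p_w)$. Writing $\alpha=p_w/(p_s+p_w)$ for the probability that a given side \emph{fails} to activate $0$, the case $\sigma_0=e$ contributes an activation probability of $1-\alpha^2$, so that
$$\bigl(\text{probability that $0$ is eventually in state $s$}\bigr)=p_s+p_e\bigl(1-\alpha^2\bigr).$$

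Finally I would extract the stated bound. Since $0<\alpha<1$ we have $1-\alpha^2>1-\alpha$, so the probability above is strictly larger than $p_s+p_e(1-\alpha)$. Using $p_s+p_w+p_e=1$ together with $1-\alpha=p_s/(p_s+p_w)$, a one-line simplification gives $p_s+p_e(1-\alpha)=p_s/(p_s+p_w)=1/(1+p_w/p_s)$, which is exactly the claimed lower bound, with the strict inequality inherited from the step $1-\alpha^2>1-\alpha$.

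I expect the main obstacle to be the structural characterization in the first step: one must argue carefully that the blocking effect of walls makes the relevant random quantity precisely the state of the first non-empty site on each side, and that passing to the infinite lattice causes no difficulty because such a site is reached almost surely (as $p_e<1$). Once this characterization and the left/right independence are in place, the probability computation and the closing inequality are routine, with the strictness coming for free from the hypothesis that $p_w$ and $p_e$ are positive.
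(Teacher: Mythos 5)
Your proof is correct and follows essentially the same route as the paper's: both reduce the event to the initial state of the origin together with the states $s_L, s_R$ of the nearest non-empty sites on each side, use the independence of the two half-lines, and finish with the same algebraic bound. The only cosmetic difference is that you condition directly on the first non-$e$ site being $s$ with probability $p_s/(p_s+p_w)$, whereas the paper sums the geometric series over the distances to those sites explicitly; you are in fact slightly more careful than the paper about where the strictness of the inequality comes from.
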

\begin{proof}
Define $s_L$ (respectively
$s_R$) to be state of the closest lattice point on the left
(respectively right) of $0$ whose state is not $e$. 
Since $p_e=1-p_w-p_s<1$, 
  both $s_L$ and $s_R$ are well-defined with probability $1$.
Let $E$ be the event that lattice 
  point $0$ is eventually in state $s$, and denote by $P$ its probability
  of occurring.
Note that for $E$ to occur, either the lattice point 
  $0$ is initially in state $s$, or it is initially in state~$e$ and at least 
  one of the lattice points $s_L$ or $s_R$ is initially in
  state $s$.
Hence, recalling that $p_e=1-p_w-p_s<1$, 
\begin{align*}
P & = p_s + p_{e}\sum_{i\geq 0,j\geq 0} p_e^{i+j}(p_s^2+2p_wp_s) 
  = p_s+\frac{p_s(p_s+2p_w)p_e}{(1-p_e)^2} \\
  & \geq p_s\left(1+\frac{1-p_w-p_s}{p_w+p_s}\right) = \frac{1}{1+p_w/p_s}.
\end{align*}
\end{proof}

We now consider again the strict MBP process in the ring
$\Cycl_n(r)$ and reduce it to the aforementioned three-state  model
as follows: Fix some length $\ell$ and partition the vertices
of $\Cycl_{n}(r)$ into length $\ell$ blocks (i.e., sets of $\ell$
consecutive vertices, where $n=t\ell$). Let
$W_{\ell,r}$ be the set of all possible blocks of length $\ell$ that 
contain $r+1$ consecutive passive vertices.  
Also, let $S_{\ell,r}$ be the set of all blocks of length $\ell$ that
do not contain $r+1$ consecutive passive vertices and that, for any
state configuration for vertices not contained in the block, all
the vertices belonging to the block eventually become active when
applying the strict majority dynamics.
Any block in $W_{\ell,r}$ is  a wall in $\Cycl_{n}(r)$ and any block in   
$S_{\ell,r}$ is  a spreading state.  Any other  block is an empty state.  Let $\mu(W_{\ell,r})$ 
(respectively~$\mu(S_{\ell,r})$)
be the  probabilityr
that an arbitrary block belongs to $W_{\ell,r}$  (respectively~$S_{\ell,r}$).
The following lemma is not difficult to prove:
\begin{lemma}\label{lem:linksimple}
For $0<p<1$ and positive integers $r,\ell$,
 
(i) $\liminf_{n\rightarrow\infty} \expect{p}{X_0(n,r)} \geq 1/\bigl(1+\mu(W_{\ell,r})/\mu(S_{\ell,r})\bigr)$.
 
(ii) If $\ell\geq r+1$, then $\mu(W_{\ell,r})\leq \ell q^{r+1}$.
\end{lemma}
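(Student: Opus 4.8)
The plan is to prove the two parts of Lemma~\ref{lem:linksimple} separately, since they are essentially independent. For part (i), the key is to exploit the coupling between the strict MBP process on $\Cycl_n(r)$ and the simplified three-state model already analyzed in Lemma~\ref{lem:simple}. The definitions of $W_{\ell,r}$ and $S_{\ell,r}$ are engineered precisely so that a block labeled as a wall truly blocks spreading (it contains $r+1$ consecutive passive vertices, which can never all be activated), while a block labeled as spreading will activate all its own vertices \emph{regardless of the states outside the block}. This last ``for any state configuration'' clause is what makes the monotone coupling valid: if a block is in $S_{\ell,r}$, it becomes fully active, and an active boundary can then propagate into a neighboring empty block exactly as state $s$ spreads in the simplified model. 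Thus the probability that vertex $0$ ends up active in the ring dominates the probability that the block containing it ends up in state $s$ in the three-state model.

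The main obstacle in part (i) is handling the $n\to\infty$ limit and the discrepancy between the ring (a finite cyclic structure) and the infinite lattice $\Z$ on which Lemma~\ref{lem:simple} was stated. First I would fix $\ell$ and partition $\Cycl_n(r)$ into $t=n/\ell$ blocks, observing that the blocks are i.i.d.\ and each is classified as wall/spreading/empty with probabilities $\mu(W_{\ell,r})$, $\mu(S_{\ell,r})$, and $1-\mu(W_{\ell,r})-\mu(S_{\ell,r})$ respectively. I would then argue that the event ``block containing vertex $0$ becomes spreading in the induced three-state dynamics on the cycle of blocks'' is a lower bound for the event $X_0(n,r)=1$, by the coupling described above. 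Taking $\liminf_{n\to\infty}$, the finite cycle of $t$ blocks converges (in the relevant sense) to the bi-infinite lattice, so the survival probability tends to the bound $1/(1+\mu(W_{\ell,r})/\mu(S_{\ell,r}))$ furnished by Lemma~\ref{lem:simple}. Care is needed because Lemma~\ref{lem:simple} uses strictly positive probabilities $p_w,p_s,p_e$; I would note that for $0<p<1$ and the given definitions these are indeed all positive (at least for suitable $\ell$), or handle degenerate cases separately.

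For part (ii), the argument is a routine union bound and requires no coupling. A block of length $\ell$ lies in $W_{\ell,r}$ iff it contains $r+1$ consecutive passive vertices. Assuming $\ell\geq r+1$, there are at most $\ell-r\leq \ell$ possible starting positions within the block for a run of $r+1$ consecutive passive vertices, and each such run occurs with probability exactly $q^{r+1}$. By the union bound, $\mu(W_{\ell,r})\leq (\ell-r)q^{r+1}\leq \ell q^{r+1}$, which is the claimed inequality. I expect part (i) to be the genuine content and the main obstacle, precisely because one must verify that the block-classification yields a valid monotone coupling with the simplified model and that the passage from the finite ring to the infinite-lattice estimate is legitimate; part (ii) is elementary.
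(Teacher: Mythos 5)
Your part (ii) is correct as written (union bound over the $\ell-r\leq\ell$ possible starting positions of a run of $r+1$ passive vertices), and your plan for part (i) --- couple the ring dynamics to the three-state model and invoke Lemma~\ref{lem:simple} --- is the natural one; note that the paper itself offers no proof to compare against (it only asserts the lemma is ``not difficult to prove''). But the step you describe as what ``the definitions are engineered precisely'' to provide --- ``an active boundary can then propagate into a neighboring empty block exactly as state $s$ spreads in the simplified model'' --- is exactly where the argument breaks, and it is not a detail: the claim is false. Activation spreading from a fully active block can stall at a run of $r+1$ consecutive initially passive vertices that \emph{straddles a block boundary}; such a run stays passive forever in strict MBP, yet it is invisible to the block classification, since neither of the two blocks meeting it need contain $r+1$ consecutive passive vertices. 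Concretely, take $r=2$, $\ell=2r+1=5$ and consecutive blocks $B'=(1,1,1,1,1)$, $B=(1,1,1,0,0)$, $C=(0,1,1,1,1)$. Then $B'\in S_{\ell,r}$, while $B$ and $C$ are both ``empty'': neither contains $3$ consecutive passive vertices, and neither is in $S_{\ell,r}$ because a block whose last (resp.\ first) vertex is passive cannot self-activate against an all-passive exterior. The three-state dynamics makes $B$, then $C$, enter state $s$; but in the real dynamics each of the three passive vertices straddling the $B$--$C$ boundary has at most $2=r$ active neighbors at all times, so the first vertex of $C$ never becomes active. Hence ``the block of vertex $0$ eventually reaches state $s$'' does not imply $X_0=1$, and the monotone coupling you invoke does not exist. (Also, the wall-blocking property you stress is irrelevant for this direction; only self-activation of $S$-blocks and propagation through empty blocks matter, and it is the latter that fails.)

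In fact the gap is not only in your write-up: the inequality of part (i), read literally for all positive integers $r,\ell$, is false, so no blind proof can close it. For $r=1$, $\ell=2$ everything is computable: a passive vertex needs both neighbors active, runs of two or more passive vertices are frozen, so $\lim_{n\to\infty}\expect{p}{X_0(n,1)}=p+qp^2$; on the other hand $W_{2,1}=\{(0,0)\}$ and $S_{2,1}=\{(1,1)\}$, so the claimed bound is $1/\bigl(1+q^2/p^2\bigr)=p^2/(p^2+q^2)$, and at $p=0.9$ this gives $0.981<0.9878\ldots$, contradicting (i). A correct proof must therefore work with a modified statement in which the wall count dominates \emph{all} runs of $r+1$ consecutive passive vertices, aligned with blocks or not --- e.g., declare block $i$ a wall whenever such a run \emph{starts} in block $i$; this keeps the wall probability at most $\ell q^{r+1}$, but the block labels are then $1$-dependent rather than i.i.d., so Lemma~\ref{lem:simple} must be re-derived for that setting rather than quoted. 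The key spreading fact you would then prove (and which your proposal implicitly assumes) is: once $r$ consecutive vertices are active, rightward (or leftward) propagation stalls only at a run of $r+1$ consecutive \emph{initially} passive vertices. In the regime the paper actually uses ($\ell=2r+1$, $p>1/4$, $r\to\infty$) the extra straddling walls contribute $O(rq^{r+1})$, the same order as $\mu(W_{\ell,r})$, so Theorem~\ref{thm:lowerbound} survives this repair; but as written, your proposal asserts the propagation step without proof, and that step is precisely what is wrong.
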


We will now find a lower bound for $\mu(S_{\ell,r})$.
The goal is to prove that $\frac{\mu(W_{\ell,r})}{\mu(S_{\ell,r})}$ goes
to 0 when $r\to\infty$. For that purpose we denote, for any $0$-$1$ word $v$, by
$|v|_0$ (respectively $|v|_1$)    the number of occurrences of symbol
$0$ (respectively~$1$) in $v$,   and denote the $i$-th character of $v$
by $v_i$. We set $\ell=2r+1$ and consider the set $T_r$
of binary words $v$ of length $\ell$ satisfying the following
properties:   (1) $|v|_1=r+1$,$|v|_0=r$;  (2) $v_0=v_{2r}=1$,
$v_{r}=0$ and   (3) the word $w=w_1\cdots w_{r-1}$ of length $r-1$  
over alphabet    $\{0,1\}^2$ defined by $w_i=(v_i,v_{i+r})$   is a
generalized Dyck word~\cite{Duchon00} associated to the weight function
$\omega(a,b) = +1$ if $(a,b)=(0,0)$, $\omega(a,b) = -1$  if
$(a,b)=(1,1)$ and
$\omega(a,b) = 0$ otherwise. 
I.e., $\omega$ satisfies the following two conditions:
(i)  $\sum_{i=1}^j\omega(w_i) \geq 0$ for all $1\leq j\leq r-1$,
and
(ii) $\sum_{i=1}^{r-1}\omega(w_i) = 0$.

\begin{lemma}\label{lem:include-generalized}
If $r$ is a positive integer and $\ell=2r+1$, then 
  $\left|T_r\right|\leq\left|S_{\ell,r}\right|$.
\end{lemma}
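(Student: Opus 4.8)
The plan is to prove the stronger set inclusion $T_r\subseteq S_{\ell,r}$, from which $|T_r|\le|S_{\ell,r}|$ is immediate. Fix $v\in T_r$ and view it as a block $B$ of $\ell=2r+1$ consecutive vertices of $\Cycl_n(r)$; each such vertex has degree $2r$ and hence strict-majority threshold $\lceil(2r+1)/2\rceil=r+1$. Two things must be checked: that $B$ contains no $r+1$ consecutive passive vertices, and that every vertex of $B$ eventually becomes active for every state configuration outside $B$. The first is free: by property~(1) the word $v$ has only $r$ zeros, so it cannot contain $r+1$ consecutive $0$'s. For the second I would invoke monotonicity of the bootstrap dynamics: replacing the outside vertices by the all-passive configuration and freezing them passive forever yields a configuration that stays pointwise below the true evolution, so it suffices to show that $B$ activates completely when every vertex outside $B$ is clamped passive. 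Indexing the block as $0,1,\dots,2r$, a passive vertex $j$ then activates precisely when its window $[j-r,j+r]$ (of size $2r+1$, clamped positions counted as passive) contains at least $r+1$ active vertices.

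The engine of the proof is a reformulation of the generalized Dyck condition. Writing the weight as $\omega(w_i)=1-v_i-v_{i+r}$, conditions (i)--(ii) become $\sum_{i=1}^{j}(v_i+v_{i+r})\le j$ for $1\le j\le r-1$, with equality at $j=r-1$. I would first note that the central vertex $r$ activates at the first step, since its window is all of $B$ and contains the $r+1$ ones (as $v_r=0$). Then I would prove by induction on $s=0,1,\dots,r-1$ that, whenever $[r,r+s]$ is active, vertex $r+s+1$ activates. Indeed, partitioning $[0,2r]$ and using $\sum_{k=0}^{2r}v_k=r+1$, the number of active vertices in the window $[s+1,2r]$ of $r+s+1$ is at least $(s+1)+\bigl(r-\sum_{i=1}^{s}v_i-\sum_{i=1}^{s}v_{i+r}\bigr)$, where the $s+1$ counts the solid active stretch $[r,r+s]$ and the remaining term lower-bounds the initially active vertices flanking it on both sides. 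By the Dyck inequality at $j=s$ (trivial for $s=0$) this is at least $r+1$, so $r+s+1$ activates. Hence the entire right half $[r,2r]$ becomes active.

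At this point $[r,2r]$ is a solid block of $r+1$ active vertices, and the left half fills in by a boundary-seeded front argument: if $[r-s,2r]$ is active, then for vertex $r-s-1$ the right part $[r-s,2r-s-1]$ of its window supplies $r$ active vertices while the permanently active endpoint $0$ supplies one more, giving $r+1$ and activating $r-s-1$. Iterating for $s=0,\dots,r-1$ activates all of $[0,r-1]$, so $B$ is fully active and $v\in S_{\ell,r}$. I expect the main obstacle to be the asymmetry of $T_r$: the Dyck condition is a one-sided (prefix) constraint whose reflection is \emph{not} implied, so one cannot simply mirror the inductive step to treat the left half. The resolution is exactly the two-phase structure above — use the Dyck inequalities to grow a length-$(r+1)$ solid block on the right, and only then exploit that block together with the active endpoint $0$ to sweep back across the left half. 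A secondary point requiring care is the clamped-boundary reduction in the first paragraph, which must be justified by the monotonicity of the dynamics so that full activation under the frozen-passive exterior indeed implies full activation for every exterior configuration.
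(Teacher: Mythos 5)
Your proof is correct and follows essentially the same route as the paper's: activate the central vertex $r$ first, sweep rightward using the generalized Dyck (prefix-sum) condition to keep the number of active vertices in each window at least $r+1$, then sweep back across the left half using the solid active block $[r,2r]$ together with the initially active endpoint $0$. The only cosmetic differences are that you count window occupancy directly via the reformulation $\omega(w_i)=1-v_i-v_{i+r}$, whereas the paper tracks the same quantity incrementally ($n_{i+1}-n_i=\omega(w_{i+1-r})$), and that you make explicit the monotonicity/clamped-exterior reduction that the paper leaves implicit.
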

\begin{proof}
Consider some $v\in T_r$ and denote by $w$ the word of length $r-1$ over 
  the alphabet $\{0,1\}^2$ as defined above. 
We first consider successively each vertex $i$ of $\Cycl_n(r)$, for $i=r$ to 
  $i=2r-1$, and apply the strict majority dynamics to it. 
Initially, the state of vertex $i$ is $v_i$.
During this first 
  sequence of updates, we denote by $n_i$ the number of $1$s in 
  the neighborhood of vertex $i$ at the time this vertex is considered (i.e., 
  we take into account updates of vertices $j<i$ which possibly occurred 
  before in the sequence). 
Since $v\in T_{r}$, we have that $|v|_1=r+1$, so
  at the beginning of the process $i=r$ and $n_r=r+1$.
We claim that for all $i$ with $r\leq i<2r-1$, when we consider 
  vertex $i+1$ in the process we have $n_{i+1}-n_i = \omega(w_{i+1-r})$ and 
  the state of vertex $i$ is $1$.
  This claim is deduced by induction from the fact that 
  $i-r$ is the only vertex of index less than $i+1$ in the  
  symmetric difference of the neighborhoods of vertices $i$ and $i+1$, hence 
  $n_{i+1}-n_i = \delta_+ - \delta_-$
  where $\delta_+=1$ if vertex $i$ was updated from $0$ to $1$ in the previous step and $0$ otherwise, and $\delta_-=1$ if $v_{i-r}=1$ 
  and $0$ otherwise. Using the induction hypothesis for all $j\leq i$,
  by the Dyck property of $w$, we have  $n_{i} = n_r + \sum_{j=1}^{i-r} \omega(w_j) \geq n_r=r+1.$
Hence, if vertex $i$ was $0$ before being considered ($v_i=0$), it updates
to $1$ when considered. In any case vertex $i$ is $1$ once considered in
the process. Moreover, we have $\delta_+=1$ iff $v_i=0$ and
we deduce that $\delta_+ - \delta_-=\omega(w_{i+1-r})$, thus
establishing the claim.

Now, from the claim, we deduce that all vertices from $i=r$ to $i=2r$ are
  active when the sequence of updates ends.
Then, we consider a new sequence of updates from vertex $i=r-1$ to 
  vertex $i=1$ (successively). 
Trivially,
  when vertex $i$ is considered, its neighborhood
  contains at least $r+1$ active vertices (because the neighborhood 
  of vertex $i$ contains the active vertices
  $0$ and vertices $i+1,\ldots,i+r$). 
Hence, all vertices will be active 
  at the end of this second sequence of updates.
This completes the proof of the lemma.
\end{proof}

\begin{theorem}\label{thm:lowerbound}
If $1>p>1/4$, then there is an $r_0=r_0(p)$ such that for all $r\geq r_0$ we have: $\lim_{n\rightarrow\infty} \expect{p}{X_0(n,r)} = 1.$
\end{theorem}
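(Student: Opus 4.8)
The plan is to combine the lower bound on $\expect{p}{X_0(n,r)}$ from Lemma~\ref{lem:linksimple}(i) with the estimates on $\mu(W_{\ell,r})$ and $\mu(S_{\ell,r})$ to show that the ratio $\mu(W_{\ell,r})/\mu(S_{\ell,r})\to 0$ as $r\to\infty$. Fixing $\ell=2r+1$, Lemma~\ref{lem:linksimple}(ii) already gives $\mu(W_{\ell,r})\leq (2r+1)q^{r+1}$, which decays exponentially since $q=1-p<3/4$. The crux is therefore to produce a matching (or better) lower bound on $\mu(S_{\ell,r})$, and this is where Lemma~\ref{lem:include-generalized} enters: it suffices to lower bound $\mu$-measure of the set $T_r$, since $|T_r|\leq|S_{\ell,r}|$ and every word in $T_r$ has exactly $r+1$ ones and $r$ zeros, so each occurs with probability exactly $p^{r+1}q^r$. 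Hence $\mu(S_{\ell,r})\geq |T_r|\,p^{r+1}q^r$.

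The main technical step is thus estimating the cardinality $|T_r|$. By the defining properties, $|T_r|$ counts generalized Dyck words $w$ of length $r-1$ over $\{0,1\}^2$ under the weight $\omega$, where only the letters $(0,0)$ and $(1,1)$ carry nonzero weight $+1$ and $-1$, and the neutral letters $(0,1),(1,0)$ can appear freely. First I would count such words by summing over the number $2k$ of nonneutral letters: choosing their positions among the $r-1$ slots, filling the remaining $r-1-2k$ positions with one of the two neutral letters, and arranging the $k$ up-steps and $k$ down-steps as an ordinary Dyck path (counted by the Catalan number $C_k$). This gives a closed form $|T_r|=\sum_{k}\binom{r-1}{2k}\,2^{\,r-1-2k}\,C_k$, which I would bound below by retaining the single dominant term around $k\approx (r-1)/2\cdot\text{(appropriate fraction)}$, or more simply by a crude but sufficient lower bound such as keeping $k=0$ to get $|T_r|\geq 2^{r-1}$, and if that is too weak, extracting the Stirling-order growth of the full sum.

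The heart of the argument is then the inequality $\mu(S_{\ell,r})\geq |T_r|\,p^{r+1}q^r \geq c\,(4pq)^{r}\cdot(\text{poly in }r)$ when the sum is dominated by its central terms, since the total weight of the generalized Dyck enumeration behaves like $(2\sqrt{pq}+\text{something})^{2r}$ up to polynomial factors once the probabilities $p,q$ are folded in; I would track constants so that the resulting lower bound on $\mu(S_{\ell,r})$ is of order $(4pq)^{r}$ up to polynomial corrections. Comparing with $\mu(W_{\ell,r})\leq(2r+1)q^{r+1}$, the ratio is bounded by a polynomial times $q^{r+1}/(4pq)^{r}=q\cdot(1/(4p))^{r}$, which tends to $0$ precisely when $4p>1$, i.e.\ $p>1/4$. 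I expect the main obstacle to be getting the constant in the exponential base of $|T_r|$ large enough: a naive bound on the Catalan sum may only yield base $2$ rather than the full central-term growth, so the delicate point is verifying that, after weighting each word by $p^{r+1}q^r$, the dominant contribution to $\sum_k\binom{r-1}{2k}2^{r-1-2k}C_k\,p^{r+1}q^r$ still outpaces $q^{r+1}$ for every fixed $p>1/4$. Once this is secured, Lemma~\ref{lem:linksimple}(i) forces $\liminf_{n\to\infty}\expect{p}{X_0(n,r)}\to 1$ as $r\to\infty$, and since the expectation is at most $1$, the stated limit $\lim_{n\to\infty}\expect{p}{X_0(n,r)}=1$ follows for all $r\geq r_0(p)$.
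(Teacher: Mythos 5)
Your proposal is correct and follows essentially the same route as the paper's proof: the same chain Lemma~\ref{lem:linksimple}(i)--(ii) plus Lemma~\ref{lem:include-generalized}, the same observation that every $v\in T_r$ carries weight exactly $p^{r+1}q^r$ so that $\mu(S_{\ell,r})\geq |T_r|\,p^{r+1}q^r$, and the same extraction of $4^r$-growth from the central term of the generalized Dyck enumeration --- your dominant term, which sits at $k\approx(r-1)/4$ in $\sum_k\binom{r-1}{2k}2^{r-1-2k}C_k$, is precisely the paper's set $U_k$ (classical Dyck words of length $2k$ interleaved with $2k$ neutral letters, counted by $C_k\binom{4k}{2k}2^{2k}$ for $r=4k+1$), whose Catalan/Stirling asymptotics give $|T_r|\geq\phi(r)\,4^r$. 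The obstacle you flag resolves exactly as you suspect: the $k=0$ term (base $2$, needing $p>1/2$) is indeed too weak, but the central term suffices and yields $\mu(W_{\ell,r})/\mu(S_{\ell,r})\leq\Phi(r)\,(4p)^{-r}\to 0$ for $p>1/4$, which is the paper's conclusion.
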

\begin{proof}
First, we claim that there is some positive rational function $\phi(\cdot)$ such that 
  $|T_r| \geq \phi(r)\cdot 4^r$ for all $r$. To prove this,
  choose $r = 4k+1$ (this is without loss
  of generality since ${r\mapsto|T_r|}$ is increasing). It is
  straightforward to associate injectively a word $v\in T_r$ to any word
  $w$ of length $r-1 = 4k$ over alphabet $\{0,1\}^2$ which is a
  generalized Dyck word associated to the  weight function $\omega$ as
  defined before. To obtain a lower bound on the
  number of such generalized Dyck words, we consider the subset $U_k$ of
  words $w$ of length $4k$ over the  alphabet $\{0,1\}^2$ and such that 
  $\left|\bigl\{i : \omega(w_i) = +1\bigr\}\right| = 
   \left|\bigl\{i : \omega(w_i) = -1\bigr\}\right|= k$ and
  $\left|\bigl\{i : \omega(w_i) = 0\bigr\}\right|= 2k$. The set $U_k$
can be generated, up to a straightforward encoding, by considering
classical Dyck words of length $2k$ (weights $+1$/$-1$) interleaved by
binary words of size~$2k$. Therefore, $|T_r| \geq D_{2k}\cdot {4k\choose
2k}2^{2k}.$
Using classical results about Catalan numbers and Stirling's formula, 
  for some positive rational functions $\phi_1$ and $\phi_2$ we have 
  $|D_{2k}| \sim \phi_1(k)\cdot 4^k$ and 
  ${4k \choose 2k}\sim \phi_2(k)\cdot 4^{2k}$. 
It follows that there is some positive rational function $\phi$ such that 
  $|T_r| \geq \phi(r)\cdot 4^r$. This establishes the claim.

By Lemma~\ref{lem:include-generalized}, for some positive rational function $\phi$ it holds that
  $\mu(S_{\ell,r})\geq \phi(r)\cdot q^rp^{r+1}4^r$.
By Lemma~\ref{lem:linksimple} $(ii)$, 
  $\frac{\mu(W_{\ell,r})}{\mu(S_{\ell,r})}$ is asymptotically less than
  $\Phi(r)\cdot(4p)^{-r}$, where $\Phi(\cdot)$ is another positive rational
  function. If  $p> 1/4$ then
  $\mu(W_{\ell,r})/\mu(S_{\ell,r})$ goes to $0$ when $r\to\infty$.
The theorem follows from  Lemma~\ref{lem:linksimple} $(i)$.
\end{proof}

It follows that  $\limsup_{r\to\infty} p^+_c({\mathcal{WH}}(r)) \leq
1/4$ and, invoking Corollary~\ref{cor:liminf}, we conclude that $\lim_{r\to\infty}
p^+_c({\mathcal{WH}}(r)) = 1/4$.

%


\begin{thebibliography}{7}
\bibitem{BBDM12} 
J.~Balogh, B.~Bollob\'{a}s, H.~Duminil-Copin and R.~Morris.
The sharp threshold for bootstrap percolation in all dimensions.
 Trans.~Amer.~Math.~Soc.~364:2667-2701 (2012).

\bibitem{BBM} J.~Balogh, B.~Bollob\'{a}s and R.~Morris. 
Majority  bootstrap percolation on the hypercube. 
Combin.~Probab.~Comput.~18:17-51  (2009).


\bibitem{BP} 
J.~Balogh and B.~Pittel. 
Bootstrap percolation on the
random regular graph. Random Structures \& Algorithms 30(1-2):257-286 (2007).

\bibitem{Ch79} J.~Chalupa, P.~L.~Leath and G.~R.~Reich.  Bootstrap percolation on a Bethe lattice. J.~Phys.~C: Solid State Phys.~12:L31-L35 (1979).

\bibitem{Duchon00}
P.~Duchon. 
On the enumeration and generation of generalized Dyck words.
Discrete Math.~225(1-3):121-135 (2000).

\bibitem{R11} I.~Rapaport,  K.~Suchan, I.~Todinca and J.~Verstraete.
On dissemination thresholds in regular and irregular graph classes.
 Algorithmica 59:16-34 (2011).
\end{thebibliography}
\end{document}